\newcommand{\ruleset}[1]{\textsc{#1}}
\newcommand{\cclass}[1]{\ensuremath{\mathord{\rm #1}}}
\theoremstyle{plain}
\newtheorem{openProblem}{Open Problem}
\newtheorem{corollary}{Corollary}
\newtheorem{proposition}{Proposition}
\newtheorem{lemma}{Lemma}
\theoremstyle{definition} \newtheorem{definition}{Definition}
\let\cl@chapter\relax \makeatother
\newcommand{\gDistance}[2]{\ruleset{Graph\-Dis\-tance\-(\ensuremath{#1,#2})}}
\newcommand{\ensnort}[1]{\ruleset{EnSnort\ensuremath{\left(#1\right)}}}
\begin{document}

\title{Keeping Your Distance is Hard} 

\author{Kyle Burke}
\address{Plymouth State University, USA}
\email{kwburke@plymouth.edu}

\author{Silvia Heubach}
\address{California State University Los Angeles, USA}
\email{sheubac@calstatela.edu}

\author{Melissa Huggan}
\address{Dalhousie University, Canada}
\email{melissa.huggan@dal.ca}

\author{Svenja Huntemann}
\address{Dalhousie University, Canada}
\email{svenja.huntemann@dal.ca}

\keywords{Combinatorial game, distance game, computational complexity.}
\subjclass[2010]{Primary 91A46; Secondary 03D15;}

\thanks{Thanks to Richard Nowakowski and AARMS for making Games at Dal 2015 at Dalhousie University possible, where most of this work came from. The first author's visit at this conference was supported by the Plymouth State University Offices of the Provost and the Dean of Arts and Sciences. The third and fourth authors' research was supported in part by the Natural Sciences and Engineering Research Council of Canada and the Killam Trust.}

\begin{abstract}
We study the computational complexity of distance games, a class of combinatorial games played on graphs. A move consists of placing a coloured token on an unoccupied vertex   subject to it  not being at certain distances to already occupied vertices. The last player to move wins.  Well-known examples of distance games are \ruleset{Node-Kayles}, \ruleset{Snort}, and \ruleset{Col}, whose complexities were shown to be \cclass{PSPACE}-hard. We show that many more distance games are also \cclass{PSPACE}-hard.
\end{abstract}
\maketitle

\section{Introduction}

We begin by introducing distance games and defining specific combinatorial games needed in the remainder of the paper, then give an introduction to computational complexity and explain our proof strategy. At the end of the section we will give an overview of the organization of the paper.

\subsection{Distance Games}

Distance games were introduced by Huntemann and Nowakowski \cite{HuntemannN2014}. They are part of a larger class of combinatorial games called \textit{placement games} studied in \cite{BrownCHMMNS} and  \cite{FaridiHN}. 
Distance games are played by placing game pieces (tokens) on empty vertices of a graph (= game board) according to rules that forbid game pieces to be placed at certain distances to already occupied vertices. 
The \textit{distance} between two game pieces
is defined as the graph distance between the respective vertices they occupy. 
We make the rules of the game precise in the following definition.

\begin{definition}
  \gDistance{D}{S} is the combinatorial game played on an arbitrary graph $G = (V,E)$, where $S$ (stands for ``same") and $D$ (stands for ``different") are subsets of $\{1, 2, \ldots, n\}$, where $n=|V|$.  Each graph vertex may be empty or contain a blue or red piece.  Two players, Blue and Red, take their turn by placing a single piece of their colour on an empty vertex according to these rules:
  \begin{enumerate}
     
          \item A blue piece and a red piece are not allowed to have distance $d\in D$.
          \item Two  pieces of the same colour are not allowed to have distance  $s\in S$.

  \end{enumerate}
  Pieces may neither be moved to another vertex nor be removed once placed. The game ends when one player can no longer place a piece of their colour. 
\end{definition}

For ease of readability, we will sometimes refer to the action of placing a coloured piece on an unoccupied vertex as ``colouring the vertex" and we will call an unoccupied vertex ``uncoloured". Also note that, since we play on a finite graph of $n$ vertices,  the diameter of the graph is at most $n-1$, so $S$ and $D$ can be restricted to be finite without any loss of generality.\\ 


By definition, \gDistance{D}{S} is a \textit{partizan} combinatorial game, that is, the two players have different moves available. This results from the fact that their game pieces have different colours. For example, in the game \gDistance{\{1\}}{\varnothing} played from the position shown in Figure~\ref{fig:impartial}, Blue cannot play on vertices $2$ or $6$, while Red is allowed to play on all vertices.

\begin{figure}[h!]
  \begin{center}\includegraphics[scale=.40]{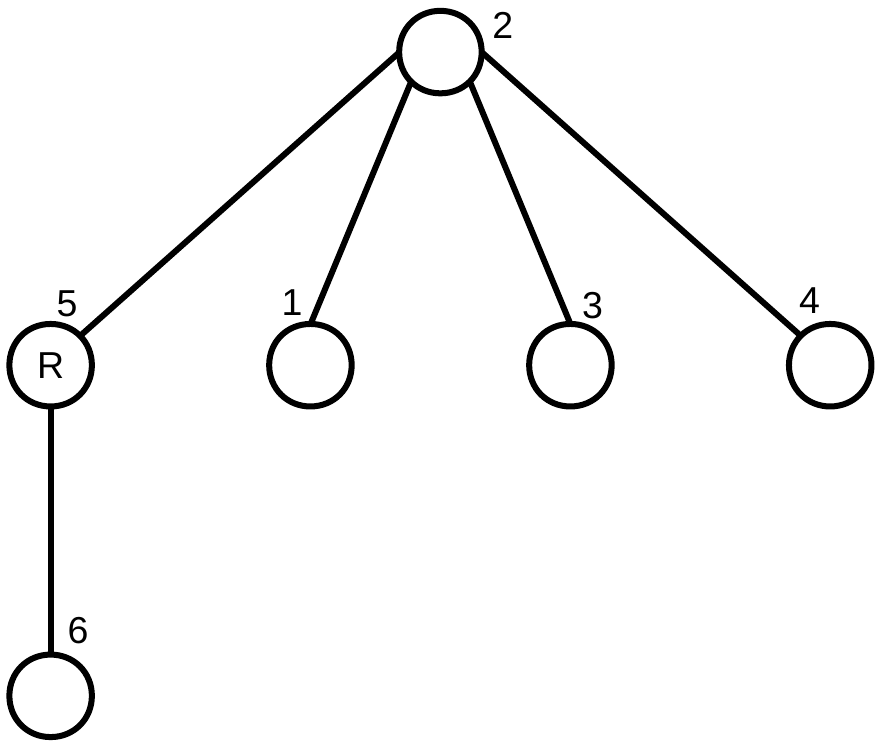}\end{center}
  \caption{A position in a graph distance game. A red piece has been placed on vertex five, while all other vertices do not yet have a piece placed on them. }
  \label{fig:impartial}
\end{figure}

On the other hand, if $D = S$, then the colour of the game pieces becomes irrelevant, allowing both players to play on the same set of vertices (though with different coloured pieces). 
Consider again the position shown in Figure~\ref{fig:impartial}. If $D=S=\{1\}$, then Blue cannot play on vertices $2$ or $6$ because $D=\{1\}$, while Red cannot play on these vertices because $S=\{1\}$. Thus, when $D = S$,  \gDistance{D}{S} is equivalent to an \textit{impartial} combinatorial game, one where both players can make the same moves and the pieces have only one colour.

The games listed below are well-known combinatorial games that will play a key role in the derivation of the complexity of distance games. We define them here for the reader unfamiliar with these games and express them as graph distance games.


\begin{definition}
In the game of \ruleset{Node-Kayles}, players alternate placing tokens of a common colour on empty vertices, all of whose neighbours are also empty. The game ends when no tokens can be placed.  
\end{definition}

Thus the impartial game \ruleset{Node-Kayles} corresponds to  \gDistance{\{1\}}{\{1\}}. 


\begin{definition}
The game \ruleset{BiGraph-Node-Kayles} is a partizan version of \ruleset{Node-Kayles} played on a bipartite graph with vertex partition $V=V_B \cup V_R$, with the additional restriction that Blue can only play on $V_B$ and Red can only play on $V_R$. 
\end{definition}


Even though the game pieces in \ruleset{BiGraph-Node-Kayles} have the same colour, the fact that all vertices within $V_B$ and $V_R$ are at even distances, and those from different vertex sets are at odd distances,  \ruleset{BiGraph-Node-Kayles} corresponds to \gDistance{D}{S}, where $S$ is the set of odd integers and $D=\{1\}\cup \{ 2, 4, 6, \ldots\}$, if at least one vertex is coloured. (If no vertices are coloured initially, then there is no way to enforce that the first player plays on the appropriate side of the bipartite graph.)


\begin{definition}\label{def:SnortCol}
In the partizan games of \ruleset{Snort} and \ruleset{Col}, players place red and blue tokens on vertices of a graph on any empty vertex with the restriction that adjacent vertices cannot have tokens of different or the same colour, respectively. The game ends when no tokens can be placed.
\end{definition}

Thus \ruleset{Snort} is the game \gDistance{\{1\}}{\varnothing} and    \ruleset{Col} is the game \gDistance{\varnothing}{\{1\}}. More information, such as simple winning strategies and other properties of \ruleset{Node-Kayles}, \ruleset{Snort}, and \ruleset{Col} can be found in \cite{WinningWays} and \cite{Schaefer1978}.






In our discussion about the complexity of determining winnability, we need to take into account all the possible positions of the game, together with the rules for each player. We will refer to this larger structure as a \emph{ruleset}. Since $S$ and $D$ completely determine the available moves, the triple $(Q, S, D)$, where $Q$ is the set  of positions of the game, specifies the ruleset. 




\subsection{Computational Complexity of Games}

Computational complexity can be applied to combinatorial games to measure how hard it is to determine whether the next player has a winning strategy.  An algorithm is considered \emph{efficient} if the running time can be expressed as a polynomial of the size of the input (i.e.~the game to be analyzed).  The computational class \cclass{P} is the set of all computational problems with these polynomial-time solutions.  \cclass{EXPTIME} consists of all problems that can be solved in exponential time or less.  Problems that \emph{require} an exponential amount of time make up the class \cclass{EXPTIME}-hard.

For the ruleset of any combinatorial game, the problem to be answered is always: ``For a given position, $G$, that has some or no coloured vertices, does Blue, moving next, have a winning strategy?"  Note that this does not mean we need to find the winning strategy, just determine whether one exists.  
For some rulesets, such as \ruleset{Nim} \cite{Bouton:1901}, this problem is in \cclass{P}.  Some other games, such as a version of \ruleset{Chess} generalized to an $n \times n$ board, are instead \cclass{EXPTIME}-hard~\cite{DBLP:journals/jct/FraenkelL81}, meaning the problems are \emph{intractable}.

\cclass{PSPACE} is the set of decision problems that can be solved using a polynomial amount of storage with no restrictions on time.  It is known that $\cclass{P} \subseteq \cclass{PSPACE} \subseteq \cclass{EXPTIME}$, and $\cclass{P} \subsetneq \cclass{EXPTIME}$, but it is not known whether either of the first two inclusions are strict or not 
\cite{PapadimitriouBook:1994}. Many decision problems are \cclass{PSPACE}-\emph{hard}, meaning they are at least as hard as the most difficult problems in \cclass{PSPACE}. 
A decision problem is \cclass{PSPACE}-\emph{complete} if it is both \cclass{PSPACE}-hard and in \cclass{PSPACE}.

The input for the decision problem is the triple $(Q, S, D)$ which specifies the ruleset. Since we play on a graph with $n$ vertices,  each position has at most $n$ move options for the current player.  Furthermore, since each move places a token on an unoccupied vertex, the game ends in at most $n$ moves, so the game tree for a given position has height at most $n$. 
To analyze the decision tree, we only need to keep track of the options of any given position that is being analyzed. So at worst we need to store $n$ positions (maximal length of the game) with at most $n$ options each, so the storage space is of order $n^2$. Therefore, distance games on graphs are at worst in PSPACE. 
As a result, any \cclass{PSPACE}-hard distance game is also \cclass{PSPACE}-complete: it is among the hardest problems included in \cclass{PSPACE}. 

A ruleset $T'$ can be shown to be \cclass{PSPACE}-hard with the help of another ruleset, say $T$, already known to be \cclass{PSPACE}-hard.  $T'$ is \cclass{PSPACE}-hard if a function $f$ exists where
\begin{itemize}
  \item $f:$ positions$(T) \rightarrow $ positions$(T')$,
  \item $f$ can be computed in polynomial time, and
  \item $f$ preserves winnability (e.g. for $t \in \text{positions}(T)$, Blue has a winning move going next on $f(t)$ exactly when Blue has a winning move going next on $t$) \cite{AlgGameTheory_GONC3}.
\end{itemize}

Such a function $f$ is called a \emph{reduction} (from $T$ to $T'$).  Finding reductions from \cclass{PSPACE}-hard games to new games is common practice for showing the \cclass{PSPACE}-hardness of these new games.  Sometimes these reductions have a stronger property: each move from any position $t \in T$ corresponds to exactly one move in $f(t)$, that is, the game trees have exactly the same shape.  Due to this injective homomorphism, we can refer to these reductions as \emph{play-for-play} reductions.  Note that the complexity results for games consider the complete set of positions, not just starting positions, because all these positions contribute to the analysis of winnability of a game.  

There are many theorems and conjectures about the winnability of games from their starting positions, though these are usually non-constructive.  Just like reductions for the hardness of \ruleset{Hex}~\cite{Reisch:1981}, \ruleset{Snort}~\cite{Schaefer1978}, and others, our reductions use partially-coloured graphs. Readers interested further in the application of computational complexity to combinatorial games should reference \cite{AlgGameTheory_GONC3}.

\subsection{Reduction Strategy}
In what follows, we will describe each reduction as a transformation of the graph $G$, on which a game $T$ is played, to  a graph $G'$, on which game $T'$ is played, via the insertion of subgraphs called {\em gadgets}. All reductions to be used will be play-for-play, as we will enforce the following two properties in all of our constructions:

\begin{itemize}
  \item \textbf{Vertex condition}: No vertex added to $G$ to form $G'$ is playable. No vertex of the original graph $G$ is deleted.
  \item \textbf{Edge condition}: None of the additional edges will result in any restrictions on the play on any of the  vertices $v \in V$ from the original graph $G$. That is, for any $v \in V$,  a Blue/Red piece can be played at $v$ under ruleset $T$ on $G$ exactly when it can be played on $v$ using ruleset $T'$ on $G'$.
\end{itemize}

We will use the fact that \ruleset{Node-Kayles}, \ruleset{BiGraph-Node-Kayles}, \ruleset{Snort}, and \ruleset{Col} (all placement games played on graph) 
are \cclass{PSPACE}-hard \cite{Burke2018, Schaefer1978} to create reductions showing that the games \gDistance{D}{S} are \cclass{PSPACE}-hard for many pairs $D$ and $S$. In all the reduction proofs, the goal is to create a graph $G'$ via a play-for-play reduction so that playing \gDistance{D}{S} on $G'$ is equivalent to playing the game from which we reduce on $G$. We let $V'$ be the union of all inserted vertices, $E'$ be the union of all inserted edges, $E_1$ the set of edges of $G$ not replaced, and $G'=(V\cup V', E_1\cup E')$ be the desired graph on which to play \gDistance{D}{S}. 
In our diagrams, we will use B (blue) and R (red) for the pieces of the respective players.

\subsection{Outline}


Even though the simplest case, namely $D = \{1, 2\}$, is covered by the more general cases to follow, we will use it  to give an introduction to the construction of gadgets by reducing from \ruleset{Node-Kayles} using very simple gadgets in Section~\ref{sec:Dsmall}. We develop a more complex gadget in Section \ref{sec:gadgets} that will be used for more general sets $S$ and $D$ in Sections~\ref{sec:max not same} ($\max(S) \neq \max(D)$) and  ~\ref{sec:same max} ($\max(S) = \max(D)$). We apply results on PSPACE-hardness of planar \ruleset{Col} and planar \ruleset{Snort} to our results 
in Section~\ref{sec:planar}. We conclude the paper with open problems for future work.



\section{$D = \{1, 2\}$ and either  $S = \varnothing$ or $S=\{1\}$}\label{sec:Dsmall}



\begin{proposition}\label{prop:Dsmall}
 The games \gDistance{\{1,2\}}{S} are \cclass{PSPACE}-hard for $S = \varnothing$ and $S=\{1\}$.
\end{proposition}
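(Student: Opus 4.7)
The plan is to give a single play-for-play reduction from \ruleset{Snort} --- known to be PSPACE-hard --- to \gDistance{\{1,2\}}{S} that works simultaneously for $S = \varnothing$ and $S = \{1\}$. The key geometric observation is that subdividing every edge of the \ruleset{Snort} board $G$ once places the two endpoints of each original edge at distance exactly $2$ in the new graph $G'$, so the $D = \{1,2\}$ clause forbids opposite colours on those endpoints --- precisely the \ruleset{Snort} rule --- while the same-colour clause $S = \{1\}$ never fires on $V(G)$, since any two originals end up at distance $\geq 2$ in the subdivision.

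First I would construct $G'$ as the edge-subdivision of $G$ with a new midpoint $w_{uv}$ for each $uv \in E(G)$, and then hang a small \emph{blocker gadget} of constant size off every $w_{uv}$. Each blocker is a short tree carrying four pre-coloured pieces (two Blue, two Red), laid out so that (i) every non-original vertex --- the midpoint $w_{uv}$ and the internal nodes of the blocker --- has both a pre-Blue and a pre-Red piece within distance $2$ in $G'$, hence is unplayable for either colour throughout the game; (ii) every pre-coloured piece lies at distance $\geq 3$ from every original vertex $v \in V(G)$, so it imposes no restriction on \ruleset{Snort} moves at $V(G)$; and (iii) no pair of pre-coloured pieces (within a gadget, across gadgets sharing a $G$-vertex, or otherwise) sits at a distance forbidden by the ruleset, so the resulting configuration is a legal starting position of \gDistance{\{1,2\}}{S}. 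Pre-existing \ruleset{Snort} colours on $V(G)$ are transferred to $G'$ unchanged.

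Next I would verify the two conditions from the reduction strategy. The vertex condition follows at once from (i), together with the observation that colours placed during play can only reinforce blocks, never weaken them. For the edge condition, a distance computation in $G'$ shows that the only already-coloured vertices within distance $\leq 2$ of an original $v \in V(G)$ are its original $G$-neighbours $u \sim_G v$, reached along the path $v - w_{uv} - u$: the pre-coloured blockers are pushed out to distance $\geq 3$ by (ii), and every other non-original vertex stays uncoloured by (i). Hence Blue (resp.\ Red) is legal at $v$ in \gDistance{\{1,2\}}{S} on $G'$ exactly when no $G$-neighbour of $v$ is Red (resp.\ Blue), which is the \ruleset{Snort} rule on $G$; the $S = \{1\}$ clause is vacuous on originals because no two of them are at distance $1$ in $G'$. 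Moves are therefore in bijection, the reduction is play-for-play, and since only $O(1)$ vertices are added per edge of $G$ it runs in polynomial time, yielding \cclass{PSPACE}-hardness of \gDistance{\{1,2\}}{S} in both cases.

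The main obstacle will be engineering the blocker gadget so that (i)--(iii) hold simultaneously. The obvious shortcuts fail: pre-colouring $w_{uv}$ directly, or hanging a single pre-coloured neighbour on it, places a pre-coloured piece within distance $2$ of the originals $u, v$ and so forbids legitimate \ruleset{Snort} moves there, while putting a Blue and a Red pre-colour too close together violates (iii). A workable design is a double-armed tree rooted at $w_{uv}$: two short paths leaving $w_{uv}$, each bearing one pre-coloured piece of each colour at appropriate depths, with the two arms mirroring one another so that every internal vertex sees both colours within distance $2$, every pre-coloured vertex is at least three steps from $V(G)$, and every opposite-colour pair of pre-coloured pieces is kept at pairwise distance $\geq 3$.
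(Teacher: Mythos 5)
Your route is genuinely different from the paper's. The paper proves this proposition by reducing from \ruleset{BiGraph-Node-Kayles}: it keeps the original graph $G$ and its edges intact (so the bipartition absorbs all distance-$2$ effects of $D$, since vertices at distance $2$ lie on the same side) and merely attaches one small pre-coloured pendant per side to force Left onto $V_L$ and Right onto $V_R$. You instead reduce from \ruleset{Snort}, subdivide every edge so that $G$-adjacent vertices land at distance $2$, and neutralize the midpoints with blocker gadgets. This is really the paper's later edge-replacement strategy (\cref{sec:gadgets}, \cref{th:ensnorthard}) specialized to $r=2$, and your conditions (i)--(iii) are exactly the right specification: they are the content of \cref{lem:gadget}. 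So the plan is sound and, if the gadget is realized, yields a valid play-for-play reduction; what the paper's \cref{sec:Dsmall} argument buys is that none of this gadget engineering is needed for $D=\{1,2\}$.

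The one genuine gap is that the gadget you sketch does not meet your own condition (i). If each arm is a pendant path from $w_{uv}$ carrying one Blue and one Red, both at depth $\ge 2$ (to keep them at distance $\ge 3$ from $V(G)$) and at mutual distance $\ge 3$ (to keep the position legal under $D=\{1,2\}$), then the depth-$1$ vertex of an arm sees only one colour within distance $2$: the nearest opposite-colour piece on its own arm is at distance $\ge 4$, and the depth-$2$ piece on the other arm is at distance $3$. For $S=\varnothing$ that vertex is therefore playable, and since it sits at distance $2$ from the originals $u$ and $v$, colouring it would illegally block a legitimate \ruleset{Snort} move there. The fix is either to branch (hang off each depth-$1$ vertex a length-$2$ path ending in the colour opposite to that arm's depth-$2$ piece; one checks all opposite-colour pairs stay at distance $\ge 3$) or, more economically, to join the two depth-$1$ vertices by an edge so that each is at distance $2$ from the opposite-colour piece on the other arm --- which is precisely the cross edge in the paper's forbidden vertex gadget $F(2)$. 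With that repair your argument goes through.
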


\begin{proof}
Since \ruleset{BiGraph-Node-Kayles} is \cclass{PSPACE}-hard, we will construct a  reduction from a bipartite graph $G = (V_B\cup V_R, E)$ on which  \ruleset{BiGraph-Node-Kayles} is played to a  graph  $G'$ on  which  \gDistance{\{1,2\}}{S} is played, where $S = \varnothing$ or $S = \{1\}$.
 
We first look at the reduction from
$G$ to $G'$ when $S=\{1\}$, which is illustrated in \cref{fig:d12s1}.  Note that we do not show the edges connecting the sets $V_B$ and $V_R$ to better focus on the reduction, which preserves $G$ as a subgraph. 

\begin{figure}[h!t]
  \begin{center}\includegraphics[scale=.5]{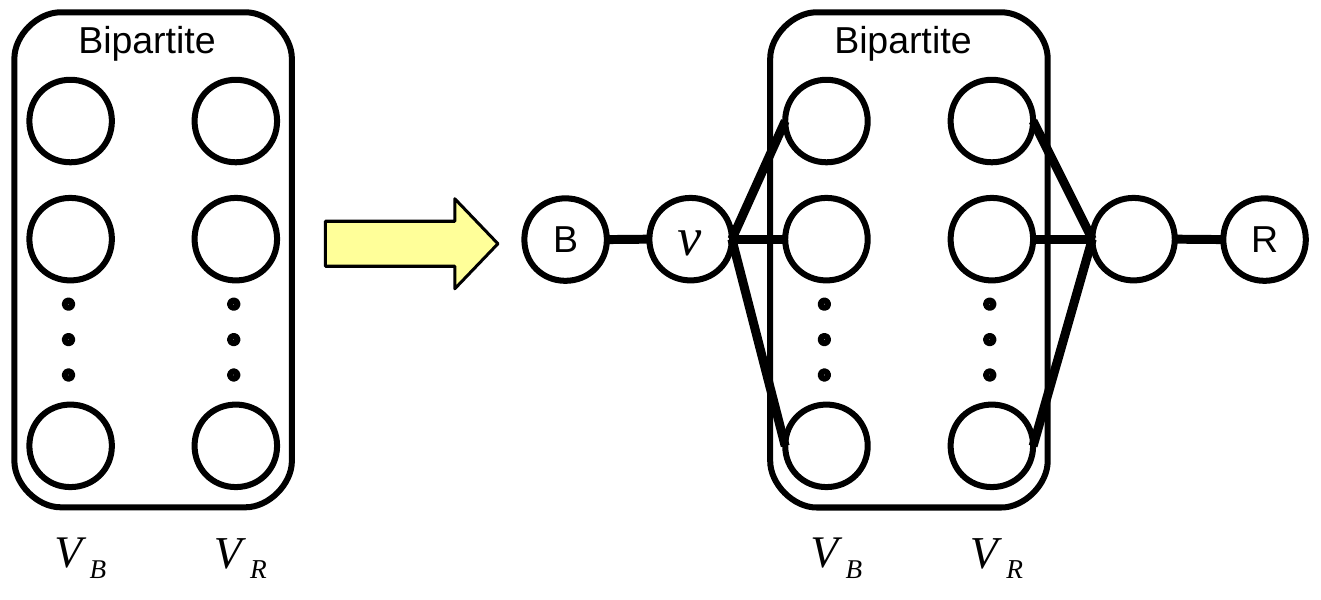}\end{center}
  \caption{The reduction from \ruleset{BiGraph-Node-Kayles} on the left  to \gDistance{\{1, 2\}}{\{1\}} on the right.}
  \label{fig:d12s1}
\end{figure}

In \ruleset{BiGraph-Node-Kayles}, $V_B$ is restricted to only be playable by Blue and $V_R$ to only be playable by Red. Our goal is to create the graph $G'$ via a play-for-play reduction so that playing \gDistance{D}{S} on $G'$ is equivalent to playing \ruleset{BiGraph-Node-Kayles} on $G$.  We describe the reduction as it relates to the vertices in $V_B$.

The first goal is to ensure that $V_B$ cannot be played by Red, hence we connect all vertices from $V_B$ to an uncoloured vertex (labeled $v$) that is connected to an external vertex coloured blue. Since this vertex is distance two from all vertices in $V_B$, no vertex in $V_B$ can be coloured red. Also, the vertex $v$ is at distance one from B, so it can be coloured neither red nor blue, so the vertex condition is satisfied.  
Next we check the edge condition. We need to be careful because vertex $v$ now connects vertices in $V_B$ with a path of length two. However, all vertices in $V_B$ can only be colored in blue, and since $S=\{1\}$, these new paths do not impose restrictions on vertices in the graph $G$.  Replicating the gadget on the right-hand side of the bipartite graph with B replaced by R completes the reduction for $S=\{1\}$. 

We now consider the case $S = \varnothing$, which removes the constraint on labeling the vertex $v$ blue, so we need to put additional vertices into the gadget to create this restriction.  We replace $v$ by a path of length two consisting of two uncoloured vertices $v_1$ and $v_2$ and a terminal vertex coloured red, as shown in \cref{fig:d12}. The red vertex now keeps vertex $v_2$ from being coloured blue.

\begin{figure}[h!]
  \begin{center}\includegraphics[scale=.45]{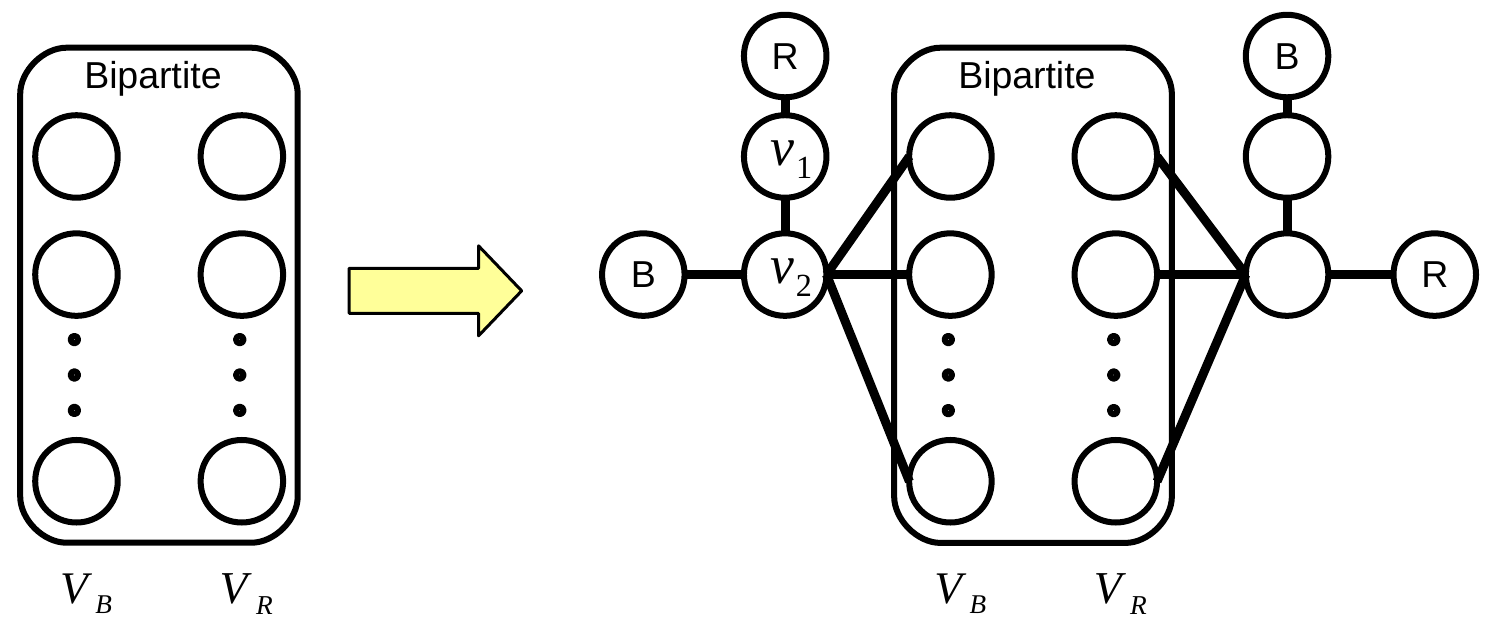}\end{center}
  \caption{The reduction from \ruleset{BiGraph-Node-Kayles} on the left to \gDistance{\{1, 2\}}{\varnothing} on the right.}
  \label{fig:d12}
\end{figure}

In addition, the intermediate vertex $v_1$ is unplayable by both players as it is distance two from B and hence cannot be coloured red, and is adjacent to R and so cannot be coloured blue. Therefore the gadget consisting of these four vertices satisfies the vertex condition. Since the restrictions on colouring vertices in the same colour have been removed completely, none of the paths of length two created by $v_1$ has any impact, so the edge condition is satisfied as well. We replicate this gadget on the right-hand side, switching the roles of R and B for $V_R$, which completes the reduction.  
\end{proof}


Note that in these reductions for $D=\{1,2\}$ we inserted a fixed number of vertices (four and eight, respectively), but that the number of additional edges is a function of the number of vertices of the original graph $G$. Specifically, we added a total of $n+2$ and $n+6$ edges, respectively. The number of edges of $G$ does not play any role in the construction of the gadget. This is quite different from the general case discussed later, where the number of the gadgets (and hence the number of vertices and edges) inserted also depends on the number of edges  of the original graph $G$.

\section{Construction of the forbidden path gadget}\label{sec:gadgets}

For the remaining sets $S$ and $D$ to be considered in this paper, we will utilize a common  construction for the various reductions. So far we have only added vertices and edges, but in the general case, we will also replace edges by subgraphs.  As before, the concern is to make any vertex that  is added into the graph unplayable by each of the two players in such a way that the vertices in the original graph $G$ from which we reduce are not affected. We will achieve this by creating a forbidden vertex gadget and a forbidden path gadget. 

\begin{lemma}\label{lem:gadget}
  If  $D$ or $S$ equals the set $\{1, 2, \ldots, r\}$ for some $r$, and the other is a subset of $\{1, 2, \ldots, r\}$, then we can create a {\em forbidden vertex gadget} $F(r)$ {\em of size $r$} which creates a vertex $v$ such that $v$ is uncoloured, but neither player may choose to play at $v$, and the playability of any vertex connected to $v$ is not affected by the vertices in the gadget. Furthermore, all vertices in the gadget are either coloured, or uncoloured and unplayable.
\end{lemma}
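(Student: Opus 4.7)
The plan is to construct $F(r)$ by attaching two disjoint paths to $v$, each of length $r$, one terminating at a blue vertex $B$ and the other at a red vertex $R$. Since the roles of $D$ and $S$ are essentially interchangeable here, I assume without loss of generality that $D = \{1, 2, \ldots, r\}$; the mirror construction when $S = \{1, \ldots, r\}$ just substitutes same-colour blocking for cross-colour blocking and is otherwise identical. With the initial two-arm structure in place, $v$ sits at distance exactly $r$ from each of $B$ and $R$, so (because $r \in D$) neither colour may be played at $v$. Any external neighbour $u$ of $v$ lies at distance $r + 1$ from both coloured endpoints, and $r + 1 \notin D$, so the gadget does not affect $u$'s playability. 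The coloured pieces themselves sit at distance $2r$ from each other through $v$, which is outside $D$, so the colouring is legal.

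The main obstacle is handling the intermediate vertices of the two arms: each such vertex is automatically blocked from one colour by the endpoint of its own arm (by a distance in $\{1, \ldots, r-1\} \subseteq D$), but the opposite coloured endpoint is too far away (distance $> r$ through $v$). I resolve this by a case split on the parity of $r$. For even $r$, add a single shortcut edge between the midpoints of the two arms, i.e., between $a_{r/2}$ and $b_{r/2}$. A direct distance check shows that the $B$-$R$ distance becomes $r + 1$ (still outside $D$), the external neighbour $u$ remains at distance $r + 1$ from both coloured endpoints, and every intermediate vertex on either arm now has both a blue and a red piece within distance $r$. For odd $r$, use instead a stem of length $(r-1)/2$ from $v$ to a new vertex $a$, and then branch from $a$ into two arms of length $(r+1)/2$, one terminating in $B$ and the other in $R$; again $B$-$R$ distance is $r + 1$, external neighbours are undisturbed, and each intermediate of the stem or of an arm has both colours within distance $r$.

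The hard part is the parity split itself: a single uniform construction does not seem to work, because the simpler ``branch at a common vertex'' gadget forces the $B$-$R$ distance into $D$ for even $r$, while the ``straight arms plus midpoint shortcut'' gadget has no integer midpoint when $r$ is odd and cannot simultaneously reach every intermediate within distance $r$ without creating a $B$-$R$ conflict. Once the two cases are separated, verification reduces to a finite distance calculation confirming that (i) $v$ is uncoloured and unplayable, (ii) every other uncoloured vertex in the gadget has a blue and a red piece within distance $r$ and is therefore unplayable in either colour, (iii) the pairwise distances among $B$ and $R$ avoid $D$, and (iv) every vertex outside the gadget that attaches to $v$ is at distance strictly greater than $r$ from every coloured vertex of the gadget. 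The edge cases $r = 1$ and $r = 2$ collapse naturally into the two families (stem of length $0$ and shortcut at the unique midpoint respectively) and are handled the same way.
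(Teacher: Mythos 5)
Your construction is correct and follows essentially the same approach as the paper: a gadget with one blue and one red terminal, each at distance exactly $r$ from $v$, arranged so that every intermediate vertex lies within distance $r$ of both terminals while the terminals sit at distance $r+1$ from each other and from any neighbour of $v$. The paper packages both parities into a single figure (a shared stem with a dashed shortcut edge) rather than two separate gadgets, but its verification splits on the parity of $r$ exactly as yours does, and the underlying distance computations are the same.
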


\begin{figure}[h!]
\begin{center}
  \includegraphics[scale=.5]{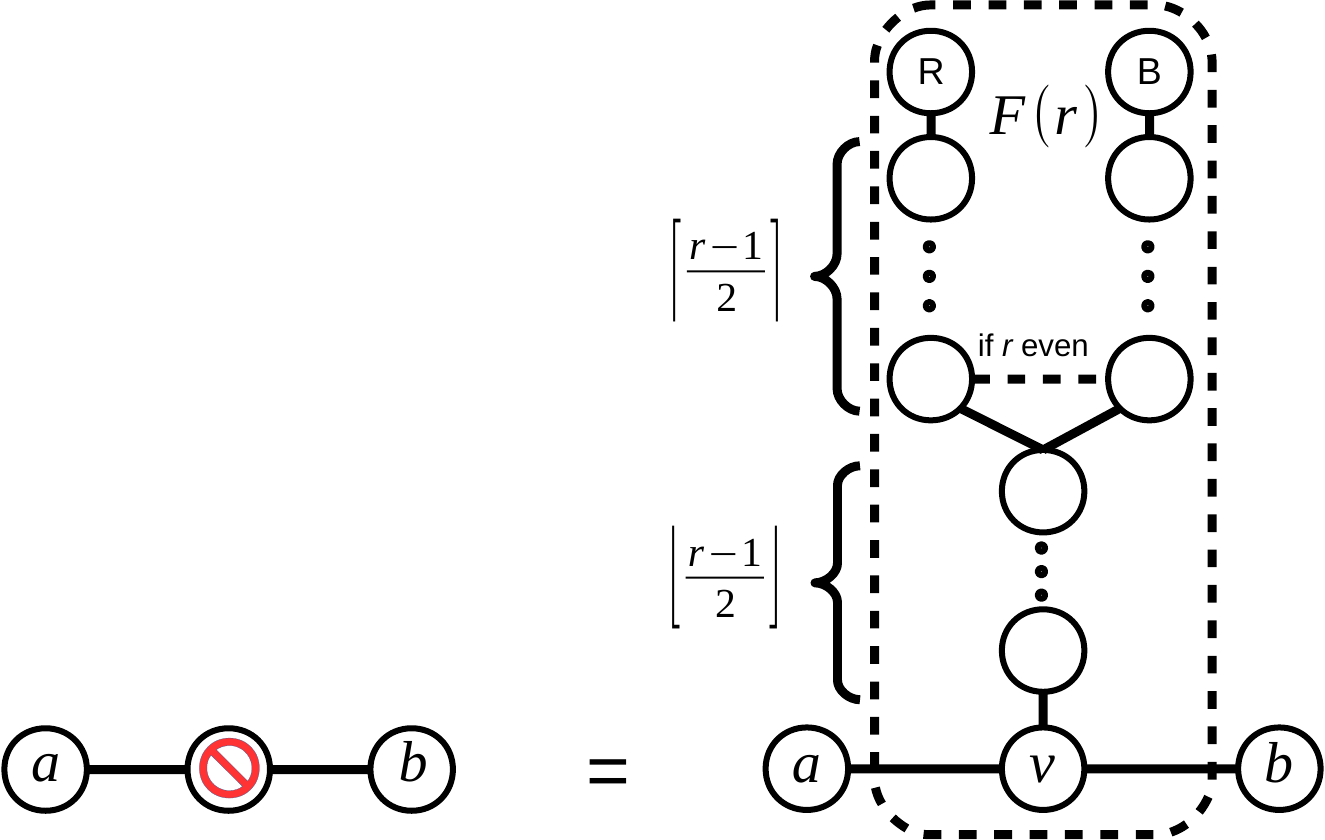}
  \end{center}
  \caption{Given $D = \{1, 2, \ldots, r\}$, $F(r)$ creates a forbidden vertex (between $a$ and $b$) without affecting the playability of $a$ or $b$.}
  \label{fig:unplayableVertex}
\end{figure}

\begin{proof}

  Consider the gadget $F(r)$ shown in  \cref{fig:unplayableVertex}, which is connected to vertices $a$ and $b$. We now prove that any uncoloured vertex in the gadget is unplayable by either player.  Since the gadget is symmetric, we assume without loss of generality that $D= \{1, 2, \ldots, r\}$. For any $r$, the paths from the vertices labeled R and B, respectively, to vertex $v$ are of length $r$, so each of the vertices on these paths cannot be coloured blue and red, respectively. This means the vertices common to both paths cannot be coloured with either red or blue. In addition, any vertex connected to $v$ is not affected by the  vertices labeled R and B as their distance from such a vertex is at least $r+1$.  For the upper portions of the two paths we now need to ensure that these vertices also cannot be coloured with either colour. When $r$ is even, the shortest path from R to B using the dashed edge has length $2\lceil \frac{r-1}{2}\rceil+1=r+1$, so each of those vertices is within distance $r$ of the R and B vertices and cannot be coloured in either colour. When $r$ is odd, then the shortest path from R to B has length $2\lceil \frac{r-1}{2}\rceil+2=r+1$. Overall, all the unlabeled vertices in $F(r)$ cannot be played by either player, as stated.
\end{proof}

We will use a path of an appropriate length made up of forbidden vertex gadgets in the various reductions, where the size of the forbidden vertex gadget is equal to the maximal element in the distance set that consists of consecutive integers.  We refer to a path consisting of $t$ forbidden vertices $F(r)$ as $F\!P(t,r)$, as shown in \cref{fig:forbiddenPath}. 

\begin{figure}[h!]
  \begin{center}\includegraphics[scale=.5]{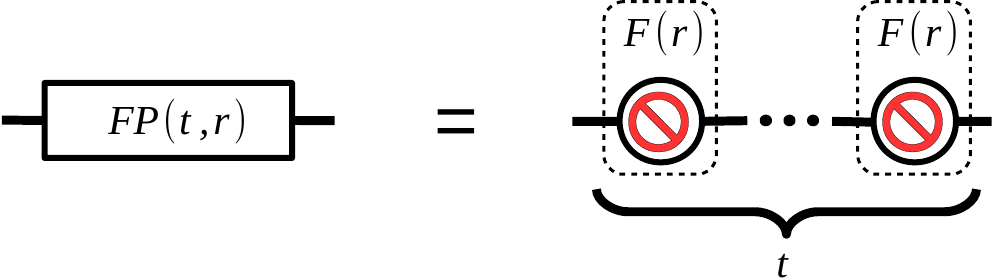}\end{center}
  \caption{$F\!P(t,r)$ represents a path of $t$ forbidden vertices of size $r$ with edges leaving either end.}
  \label{fig:forbiddenPath}
\end{figure}

Such a path will be used to replace an edge between two vertices of the original graph, as shown in \cref{fig:d1234nBipartiteEdge}. We will refer to this operation as {\em edge replacement}. Note that inserting either one of these gadgets into the graph $G$ automatically satisfies the vertex condition of the play-for-play reduction by \cref{lem:gadget}.

  \begin{figure}[h!]
  \includegraphics[scale=.6]{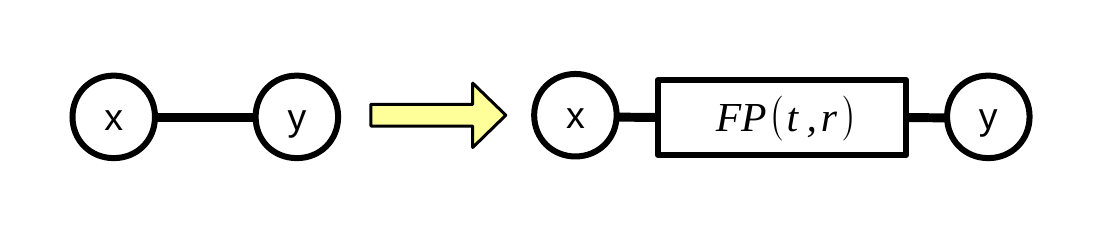}
  \caption{The edge replacement operation for edges $(x,y)$ in the original graph $G$ for the reduction to graph $G'$ for \gDistance{D}{S}, where  $t$ and $r$ depend on the particular reduction.}
  \label{fig:d1234nBipartiteEdge}
\end{figure}


Note that when we replace an edge with a forbidden path $F\!P(t,r)$, we add a total of $t(1+\lfloor \frac{r-1}{2} \rfloor +2\lceil\frac{r-1 } {2} \rceil +2)\approx t\frac{3}{2}r$ vertices. The number of edges added is of the same order (there is a difference of 1 edge depending on whether $r$ is odd or even). Overall, replacement of an edge by a forbidden path $F\!P(t,r)$ or addition of $F\!P(t,r)$ to any vertex adds O$( t r)$ edges and vertices to the graph. We will use this fact when looking at specific reductions in the next few sections.


We are now ready to prove that \gDistance{D}{S} is \cclass{PSPACE}-hard for more general sets $D$ and $S$.

\section{$D$ or $S$ equals $\{1, 2, \ldots, r\}$,\,   $\max(D) \ne \max(S)$}\label{sec:max not same}


Let us first assume that $D=\{1, 2, \ldots, r\}$ with $r \ge 2$ and consider the case  $S = \varnothing$. These distance games are generalizations of \ruleset{Snort}. We define \ensnort{r} to be the game \gDistance{\{1, 2, \ldots, r\}}{\varnothing} for $r \ge 2$.

\begin{proposition}\label{th:ensnorthard}
 \ensnort{r} is  \cclass{PSPACE}-hard.
\end{proposition}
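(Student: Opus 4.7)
The plan is to reduce from \ruleset{Snort}, which is \cclass{PSPACE}-hard. Given a graph $G=(V,E)$ on which \ruleset{Snort} is played, I construct $G'$ by keeping every vertex of $V$ and replacing each edge $(x,y)\in E$ with the forbidden path $FP(n-1,n)$ from \cref{sec:gadgets}. This subdivides every original edge into a chain of $n-1$ forbidden vertex gadgets of size $n$, so that $x$ and $y$ end up at graph distance exactly $n$ in $G'$. The construction is polynomial, since each $F(n)$ contributes $O(n)$ vertices and we insert one $FP(n-1,n)$ per edge.

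The vertex condition is immediate from \cref{lem:gadget}: every inserted vertex is either a labeled R/B vertex or an uncoloured unplayable vertex inside some $F(n)$, so no added vertex can be played by either player. For the edge condition, I need to verify that the gadgets do not change the playability of any $v \in V$ and that the move structure on $V$ coincides between the two games. This reduces to a distance calculation: I want $\mathrm{dist}_{G'}(v,w)\le n$ iff $(v,w)\in E$, and $\mathrm{dist}_{G'}(v,z)>n$ for every labeled R or B vertex $z$ inside any gadget. The first holds because $(v,w)\in E$ yields distance exactly $n$ through the inserted $FP(n-1,n)$, while any $v,w\in V$ with $(v,w)\notin E$ must be connected in $G'$ by a path passing through at least two gadget replacements, contributing total length at least $2n>n$. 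The second follows from \cref{lem:gadget}, which places each R and B label at distance $n$ from the central vertex $v$ of its gadget, hence at distance greater than $n$ from any $V$-vertex the gadget is attached to. Since $S = \varnothing$, there are no same-colour restrictions to worry about.

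The main obstacle is ruling out unintended shortcuts: I must confirm that the internal R/B subpaths inside one $F(n)$ do not connect adjacent $F(n)$'s in a chain more tightly than the chain of their central vertices, and that labeled vertices across different gadgets are not themselves joined by shorter routes that inadvertently cut through the interior of $G'$. This is what the self-contained construction in \cref{lem:gadget} is designed to guarantee, but it has to be spelled out carefully for a chain of gadgets, not just a single one. Once this is done, the only playable vertices of $G'$ are those in $V$, and the blocking structure on $V$ matches \ruleset{Snort} exactly, so \ensnort{n} on $G'$ is play-for-play equivalent to \ruleset{Snort} on $G$ and \cclass{PSPACE}-hardness transfers.
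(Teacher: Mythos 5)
Your proposal matches the paper's proof: both reduce from \ruleset{Snort} by replacing each edge of $G$ with the path gadget $FP(n-1,n)$, invoke \cref{lem:gadget} for the unplayability of all inserted vertices, and observe that adjacent vertices of $G$ end up at distance exactly $n$ in $G'$ while $S=\varnothing$ imposes no further constraints. The extra distance bookkeeping you flag (no shortcuts between gadgets, labeled vertices at distance greater than $n$ from $V$) is a reasonable point of care but does not change the argument, which the paper states more briefly.
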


\begin{proof}
\ensnort{r} is a generalization of \ruleset{Snort}, which is \cclass{PSPACE}-hard \cite{Schaefer1978}, so we will use a reduction from \ruleset{Snort} to prove the result. Let $G = (V, E)$ be any graph. 
Since colouring a vertex in \ensnort{r}  affects vertices up to distance $r$ from a coloured vertex, we need to create a reduction that allows us to increase the distance between the vertices in $G$ in such a way that any vertex that is inserted is not playable by either player. This can be achieved by 
replacing each edge in $G$ with a forbidden path $F\!P(r-1,r)$. 
Now playing \ensnort{r} on $G'$ is  exactly the same as playing \ruleset{Snort} on $G$.  
\end{proof}


For non-empty sets $S$ with $\max(S) < r=\max(D)$, the same reduction as in the case $S=\varnothing$ works because we only used properties of $D$, specifically the maximal reach, in the construction of the forbidden vertices $F(r)$ and paths $F\!P(r-1,r)$. As long as $\max(S) < r$, colouring restrictions from the set $S$  do not impact any of the  uncoloured vertices from $V$ in $G'$, as all vertices from $V$ are now at distance $r$ from any other vertex in $V$.

\begin{corollary}\label{thm:SsmallD}
 For $r \ge 2$, \gDistance{ \{1, 2, 3, \ldots, r\}}{S} is \cclass{PSPACE}-hard when $\max(S)<r$.
\end{corollary}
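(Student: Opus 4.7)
The plan is to mimic the reduction used in \cref{th:ensnorthard} essentially verbatim, reducing from \ruleset{Snort} on a graph $G = (V, E)$ to \gDistance{\{1,2,\ldots,n\}}{S} on a graph $G'$ by replacing every edge of $G$ with a forbidden path $FP(n-1, n)$. Letting $V'$ and $E'$ denote the vertices and edges introduced by all the gadgets, set $G' = (V \cup V', E')$. Since $D = \{1,2,\ldots,n\}$ is an interval, \cref{lem:gadget} applies and guarantees that every vertex in $V'$ is either coloured or uncoloured-and-unplayable, so the vertex condition holds automatically.

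The heart of the argument is verifying the edge condition, and this is where the assumption $\max(S) < n$ does all the work. By construction, whenever $(x,y) \in E$, the distance in $G'$ between the corresponding vertices is exactly $n$. Since each edge of $G$ contributes a path of length $n$ in $G'$ and the gadgets offer no shortcuts between the original vertices, for any two distinct $x, y \in V$ we have $d_{G'}(x,y) = n \cdot d_G(x,y)$. In particular, if $x$ and $y$ are non-adjacent in $G$ then $d_{G'}(x,y) \geq 2n$, which lies outside both $D$ and $S$, so playing at $x$ places no constraint on $y$. If $x$ and $y$ are adjacent in $G$, then $d_{G'}(x,y) = n \in D$, so they cannot receive different colours, matching the \ruleset{Snort} constraint; moreover $n \notin S$ because $\max(S) < n$, so two pieces of the same colour may sit on adjacent vertices of $G$ in $G'$, again matching \ruleset{Snort}.

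Once these distance checks are in place, the conclusion is immediate: the only playable vertices of $G'$ are the vertices of $V$, and a move at $v \in V$ is legal under \gDistance{\{1,2,\ldots,n\}}{S} on $G'$ precisely when the same move is legal under \ruleset{Snort} on $G$. This gives a play-for-play reduction computable in polynomial time (the number of inserted vertices is $O(|E|\cdot n)$), and hence \gDistance{\{1,2,\ldots,n\}}{S} is \cclass{PSPACE}-hard.

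The main obstacle I would watch out for is the verification that no unintended shortcuts arise between original vertices through shared forbidden vertex gadgets, since if two edges of $G$ share a vertex $v$, the two copies of $FP(n-1,n)$ attached at $v$ might, in principle, interact. This is avoided because each edge of $G$ gets its own disjoint gadget and the gadgets are joined only at their endpoints in $V$, so the distance analysis $d_{G'}(x,y) = n \cdot d_G(x,y)$ remains clean. Everything else is a direct reuse of \cref{th:ensnorthard}, with the single new ingredient being the observation that $\max(S) < n$ rules out $S$-restrictions between vertices of $V$.
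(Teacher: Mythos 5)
Your proof is correct and follows essentially the same route as the paper: reduce from \ruleset{Snort} by replacing each edge with $FP(n-1,n)$ and observe that, since every pair of distinct original vertices ends up at distance at least $n > \max(S)$, the set $S$ imposes no constraints on the playable vertices. Your explicit distance computation $d_{G'}(x,y) = n\cdot d_G(x,y)$ is a slightly more careful justification of the paper's one-line claim, but the argument is the same.
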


The case  $S=\{1,2,\ldots,r\}$ is very similar to the one treated above, with the roles of $S$ and $D$ interchanged, with reduction from \ruleset{Col}, which is \cclass{PSPACE}-hard as well \cite{Fenner2015}.

\begin{proposition}
\label{encolhard}
 For $r \ge 2$, \gDistance{D}{\{1, 2, 3 \ldots, r\} } is \cclass{PSPACE}-hard when $\max(D) < r$.
\end{proposition}

To measure the impact of the gadget insertion on the graph in both cases, we let $m = |E|$ denote the number of edges in the original graph $G$. Since we have replaced each edge by a forbidden path $F\!P(r-1,r)$,  we have added O$(m r^2)$ edges and vertices. 
Note that the maximal value of $m$ is $n^2$ for a complete graph, and that $r \le n$. So in the worst case scenario, this reduction adds $O(n^4)$ edges and vertices.

We now turn to the question of why we have to exclude the case $\max(S)=\max(D)$. 
If $\max(S)=\max(D)=r$, then for  a vertex $x$  coloured in one colour, a  vertex $y$ such that $(x,y)\in E$ would now be uncolourable in either colour, not just the other colour. This is why we need a 
reduction from a game that has the feature that a vertex adjacent to a coloured vertex in $V$ cannot be coloured in either colour. This suggests a reduction from \ruleset{Node-Kayles}.

\section{$D$ or $S$ equals $\{1, 2, \ldots, r\}$ and $r=\max(D)=\max(S)$ }\label{sec:same max}

In this section, we consider distance games in which the maximum distances not playable by the same and different colours are identical. 

\begin{proposition}
    \gDistance{D}{S} is \cclass{PSPACE}-hard when either $D$ or $S$ equals $\{1, 2, \ldots, r\}$ and the other set is a subset of $\{1, 2, \ldots, r\}$ with $2 \le r =\max(D)=\max(S)$.
\end{proposition}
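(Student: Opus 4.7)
The plan is to reduce from \ruleset{Node-Kayles}, using the same edge-replacement idea as in \cref{th:ensnorthard} but now exploiting the fact that $m$ lies in both $D$ and $S$. Since either $D$ or $S$ equals $\{1,2,\ldots,m\}$ while the other is a subset, \cref{lem:gadget} applies with $r=m$, so the forbidden vertex gadget $F(m)$ and the forbidden path $FP(m-1,m)$ are both available. Given a \ruleset{Node-Kayles} instance $G=(V,E)$, I would build $G'$ by replacing each edge $(x,y)\in E$ with a fresh copy of $FP(m-1,m)$, making every $G$-adjacent pair of $V$-vertices land at graph distance exactly $m$ in $G'$.

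Verification then breaks into three pieces. The vertex condition is immediate from \cref{lem:gadget}, since every vertex introduced by a gadget is either pre-coloured or uncoloured and unplayable, leaving only the vertices of $V$ playable in $G'$. For the edge condition, distinct gadgets share no internal vertices, so any path between two $V$-vertices in $G'$ must traverse one or more complete forbidden paths; hence $G$-adjacent pairs sit at distance exactly $m$ in $G'$, while all other $V$-pairs sit at distance at least $2m$, which exceeds every element of $D\cup S\subseteq\{1,\ldots,m\}$. Finally, for game equivalence, the key observation is that $m\in D\cap S$: colouring any $x\in V$ in either colour forbids both colours at every $G$-neighbour of $x$ and places no restriction on any other $V$-vertex. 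Therefore Left and Right share exactly the same legal moves at every reachable position, and those moves correspond bijectively to the moves of the associated \ruleset{Node-Kayles} position on $G$. Since the construction is polynomial in $|G|$, \gDistance{D}{S} is \cclass{PSPACE}-hard.

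The main point requiring care is that the partizan ruleset \gDistance{D}{S} truly behaves impartially on $G'$. This is exactly what $m\in D\cap S$ buys us: at the only effective distance between $V$-vertices, the colour restriction is symmetric, so a Left move and a Right move at a given $V$-vertex have identical downstream consequences. Any smaller forbidden distances in $D$ or $S$ are inert because no two $V$-vertices of $G'$ are ever at distance smaller than $m$.
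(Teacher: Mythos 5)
Your proposal is correct and follows essentially the same route as the paper: a reduction from \ruleset{Node-Kayles} replacing each edge with $FP(m-1,m)$, invoking \cref{lem:gadget} for the vertex condition, and observing that $m\in D\cap S$ makes the game behave impartially on the $V$-vertices while all smaller forbidden distances are inert because no two $V$-vertices lie closer than distance $m$ in $G'$. Your write-up is, if anything, slightly more explicit than the paper's about why the sub-maximal distances cannot matter and about the polynomial bound on the construction.
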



\begin{proof}
    Let $G = (V, E)$ be any graph. We reduce from \ruleset{Node-Kayles}, which is \cclass{PSPACE}-hard \cite{Schaefer1978},  
    and start with the extreme case where both $D$ and $S$ consist of the full set $\{1,2\ldots, r\}$. Since all vertices at distances less than or equal to $r$ are unplayable by either player,  we replace each edge $(x,y) \in E$ by the path gadget $F\!P(r-1,r)$. 
Then playing \gDistance{D}{S} on $G'$ is exactly the same as playing \ruleset{Node-Kayles} on $G$. 
    
    For the more general case where $S \ne D$, the same construction works as any vertex inserted through the path gadget is unplayable as long as one of the two sets equals $\{1,2,\dots,r\}$ by \cref{lem:gadget}. The conclusion follows as in the case $S=D$ since the only relevant distances for play are the maximal distances. 
\end{proof}

Since we used the same edge replacement as in Proposition~\ref{th:ensnorthard}, we add at most O$(n^4)$ edges and vertices in this case as well.

\section{Distance Games on Planar Graphs}\label{sec:planar}

So far we have considered the computational complexity of distance games on any graph. A game on a more specialized (potentially simpler) graph may be easier to solve, and therefore might not be \cclass{PSPACE}-hard even though the game played on a general graph is \cclass{PSPACE}-hard.

In \cite{Burke2018} the authors show that \ruleset{Snort} and \ruleset{Col} are \cclass{PSPACE}-hard on planar graphs. The edge replacement operation we have used in our various reductions results in a planar graph $G'$ when starting from a planar graph $G$. Thus 
our constructions show that the corresponding planar \gDistance{D}{S} games are also \cclass{PSPACE}-hard. The stronger results are listed below:


\begin{proposition}
  Both planar \gDistance{ \{1, 2, 3, \ldots, r\}}{S} and planar \gDistance{D}{\{1, 2,  \ldots, r\} } are \cclass{PSPACE}-hard when $\max(S) < r$ and   $\max(D) < r$, respectively.

\end{proposition}

\section{Conclusion and Future Work}

To summarize, we used various play-for-play reductions from known \cclass{PSPACE}-hard games to show that \gDistance{D}{S} is \cclass{PSPACE}-hard when $D$ or $S$ is $\{1,2,\ldots, r\}$ and the other one is a subset. The games  from which we reduced were in most cases the natural choices based on the properties of the distance sets $D$ and $S$. 
At the heart of the reductions was the forbidden vertex gadget. To obtain a play-for-play reduction required that the larger of the distance sets consists of consecutive integers. This leads to the following question:

\begin{openProblem}
 Is  \gDistance{D}{S}  \cclass{PSPACE}-hard for cases not covered by our results?
\end{openProblem}



As illustrated in the previous section, planar \ruleset{Snort} and planar \ruleset{Col} being \cclass{PSPACE}-hard implies that many of the distance games we considered are also \cclass{PSPACE}-hard on planar graphs because our reduction preserves planarity. If this is also the case for \ruleset{Node-Kayles}, then these results can be further extended. Thus we are interested in:


\begin{openProblem}
  Is planar \gDistance{D}{S}  \cclass{PSPACE}-hard for other cases?  
\end{openProblem}

The case where max$(S) = $ max$(D)$ and at least one of $S$ or $D$ is equal to $\{1, \ldots, r\} $ would be covered using our reduction if \ruleset{Node-Kayles} were \cclass{PSPACE}-hard on planar graphs.




\bibliography{biblio}
\bibliographystyle{abbrv}

\end{document}